\numberwithin{equation}{section}
\newtheorem{rem}{Remark}[section]
\newtheorem{prop}{Proposition}[section]
\newcounter{hypA}
\newenvironment{hypA}{\refstepcounter{hypA}\begin{itemize}
  \item[({\bf A\arabic{hypA}})]}{\end{itemize}}
\newcounter{hypB}
\newcounter{hypD}
\newcounter{hypW}
\date{}
\begin{document}

\begin{center}

{\Large \textbf{Unbiased Parameter Estimation for Bayesian Inverse Problems}}

\vspace{0.5cm}

BY  NEIL K. CHADA$^{1}$,  AJAY JASRA$^{2}$,  MOHAMED MAAMA$^{3}$ \& RAUL TEMPONE$^{3}$

{\footnotesize $^{1}$ Department of Mathematics,  City University of Hong Kong,  HK-SAR.}\\
{\footnotesize $^{2}$ School of Data Science,  The Chinese University of Hong Kong,  Shenzhen, Shenzhen CN.}\\
{\footnotesize $^{3}$Applied Mathematics and Computational Science Program,  Computer, Electrical and Mathematical Sciences and Engineering Division, King Abdullah University of Science and Technology, Thuwal, 23955-6900, KSA.}\\
{\footnotesize E-Mail:\,} \texttt{\emph{\footnotesize neilchada123@gmail.com;  ajayjasra@cuhk.edu.cn; maama.mohamed@gmail.com; raul.tempone@kaust.edu.sa
}}

\end{center}

\begin{abstract}
In this paper we consider the estimation of unknown parameters in Bayesian inverse problems.
In most cases of practical interest,  there are several barriers to performing such estimation, This includes
a numerical approximation of a solution of a differential equation and,  even if exact solutions are available,  an analytical intractability of the marginal likelihood and its associated gradient,  which is used for parameter estimation.  The focus of this article is to deliver unbiased estimates of the unknown parameters,  that is,  stochastic estimators that, in expectation,  are equal to the maximizer of the marginal likelihood,   and possess no numerical approximation error.  Based upon the ideas of \cite{ub_grad_new} we develop a new approach for unbiased parameter estimation for Bayesian inverse problems.  We prove unbiasedness and establish numerically that the associated estimation procedure is faster than the current state-of-the-art methodology for this problem. We demonstrate the performance of our methodology on a range of problems which include a PDE and ODE.
\bigskip \\
\noindent \textbf{Key words}: Bayesian Inverse Problems;  Unbiased Estimation; Markovian Stochastic Approximation.
\end{abstract}

\section{Introduction}

In this article we consider Bayesian inverse problems such as those considered in \cite{beskos2,beskos,stuart}.  Before providing a discussion of important application areas and their relevance in applied mathematics,  we shall state the problem in generic terms so that it is clear throughout the article.  
Let $\theta\in\Theta\subseteq\mathbb{R}^{d_{\theta}}$ be a collection of parameters associated to a 
probability measure $\pi_{\theta}$ on an abstract measurable space $(\mathsf{E},\mathscr{E})$; that is for any
$\theta\in\Theta$,  $\pi_{\theta}$ is a probability measure on $(\mathsf{E},\mathscr{E})$.  We shall write the expression as follows
$$
\pi_{\theta}(du) = \frac{\gamma_{\theta}(u)\,du}{\int_{\mathsf{E}}\gamma_{\theta}(u)\,du},
$$
where for each $\theta\in\Theta$, $\gamma_{\theta}:\mathsf{E}\rightarrow\mathbb{R}^+$, $du$ is a $\sigma-$finite measure on $(\mathsf{E},\mathscr{E})$ and we are assuming $$\int_{\mathsf{E}}\gamma_{\theta}(u)du<+\infty.$$
The denominator represents what is called the marginal likelihood and,  assuming it is a well-defined task,  the objective of this article is to find the (unique or collection of) maximizer $\theta^{\star}\in\Theta$ of
$\int_{\mathsf{E}}\gamma_{\theta}(u)du$.  In practice,  one will not be able to work directly with 
$\pi_{\theta}$ but only an approximation,  which is associated to a scalar parameter $l\in\mathbb{N}_0=\{0,1,\dots\}$,  which we now describe.  We will have a probability measure $\pi_{\theta}^l$ on $(\mathsf{E},\mathscr{E})$ such that 
$$
\pi_{\theta}^l(du) = \frac{\gamma_{\theta}^l(u)du}{\int_{\mathsf{E}}\gamma_{\theta}^l(u)du},
$$
where for each $(\theta,l)\in\Theta$, $\gamma_{\theta}^l:\mathsf{E}\rightarrow\mathbb{R}^+$. Moreover,  for an appropriate class of functions,  $\theta\in\Theta$,  $\varphi_{\theta}:\mathsf{E}\rightarrow\mathbb{R}$ we will have
\begin{equation}\label{eq:conv}
\lim_{l\rightarrow+\infty}\pi_{\theta}^l(\varphi_{\theta}) = \pi_{\theta}(\varphi_{\theta}),
\end{equation}
such that for any probability measure $\pi$ on $(\mathsf{E},\mathscr{E})$ and $\pi-$integrable $\varphi_{\theta}:\mathsf{E}\rightarrow\mathbb{R}$ we write $\pi(\varphi_{\theta})=\int_{\mathsf{E}}\varphi_{\theta}(u)\pi(du)$.
In \eqref{eq:conv},  we are implicitly assuming that as $l$ grows so $\pi_{\theta}^l(\varphi_{\theta})$ becomes an increasingly accurate approximation of $\pi_{\theta}(\varphi_{\theta})$.  The objective of this paper will be,  when only working with $\pi_{\theta}^l$ and $\int_{\mathsf{E}}\gamma_{\theta}^l(u)du$, for $l$ possibly very large,  to obtain an unbiased estimator of $\theta^{\star}$, that is with no $l$ approximation error.  By unbiased we mean
some stochastic estimator $\widehat{\theta}^{\star}$ such that when averaging over the randomness in the estimator (i.e.~taking expectations) one has exactly $\theta^{\star}$.  Such estimation is often very useful because it provides a reference for less exact methods,  or an estimate in its own right.

From a less abstract perspective,  the type of problems that are discussed above often relate to partial or ordinary differential equations (PDE/ODE) with unknown initial conditions and unknown parameters.  The initial condition is represented by the $u\in\mathsf{E}$ variable and $\theta$ can relate to some unknown parameters in the equations or in a conditional data likelihood; one fuses the PDE/ODE to real data and the Bayesian part of the problem is a prior on the unknown initial condition.  The parameter $l$ relates to a numerical approximation of the PDE,  such as based on finite element or volume methods.  In practice one does not use the exact solution and 
instead the numerical solution which induces the approximation error; see \cite{beskos2,beskos,stuart} for some coverage of this problem.  Applications are ubiquitous, including oil discovery,  geology and oceanography. 

Some of the key issues associated to parameter estimation are as follows:
\begin{enumerate}
\item{One has a bias,  represented by $l$,  from the numerical solution of the differential equation.}
\item{Even if the solution of the PDE/ODE is available,  one can seldom compute $\int_{\mathsf{E}}\gamma_{\theta}(u)du$ or the gradient thereof.}
\end{enumerate}
The problems that are mentioned above have often been tackled using state-of-the-art Bayesian methodologies, which include Markov chain Monte Carlo (MCMC) or sequential Monte Carlo (SMC) simulation (e.g.~\cite{viet}).  Often the idea is to consider MCMC or SMC at some given, high accuracy,  $L$ say and resort to simulation from $\pi^L_{\theta}$;  as we will explain in Section \ref{sec:model}, this can allow one to compute the gradient of $\int_{\mathsf{E}}\gamma_{\theta}^L(u)du$, which is then used inside an optimization scheme, such as stochastic approximation (SA) \cite{robbins}; see for instance \cite{beskos2,beskos}.  

These ideas have been expanded significantly by combining the multilevel Monte Carlo (e.g.~\cite{ml_rev}) with MCMC or SMC or debiasing methods e.g.~\cite{chada,mcl,rhee,matti}.  The multilevel approaches can reduce the cost, for a given error, to compute estimators of the marginal likelihood by harnessing multiple probabilities $(\pi_{\theta}^l)_{l\in\{0,\dots,L\}}$ and sampling couplings of these probabilities;  see \cite{beskos2,beskos} for details.  In the case of unbiased methods
several efforts in \cite{disc_models,ub_bip} (see also \cite{ub_grad,ruz}) have utilized related multilevel MCMC/SMC methodology to deliver exactly unbiased estimates of $\log\left(\int_{\mathsf{E}}\gamma_{\theta}(u)du\right)$ which can be used inside stochastic approximation schemes and provide unbiased estimates of $\widehat{\theta}$; the very task that is the focus of this article.  As noted,  the works in \cite{disc_models,ub_bip} focus on computing unbiased estimates of the (log) marginal likelihood and as such can be rather expensive when used inside an iterative optimization scheme such as SA and our objective is to reduce the cost,  whilst still delivering unbiased estimates of $\widehat{\theta}$.

In this article we follow the framework that was developed in \cite{ub_grad_new} (see also \cite{maama}) which focusses upon unbiased \emph{parameter estimation} versus unbiased \emph{gradient estimation}.  In the case of the latter \cite{disc_models,ub_bip} focus considerable effort on delivering an unbiased estimate of the gradient of the log-likelihood,  which when used inside SA and under mathematical conditions (e.g.~\cite{andr3,kush}) will provide convergence to $\widehat{\theta}$.  In the case of unbiased parameter estimation,  one refocusses ones effort to deliver an algorithm which will provide an unbiased estimate of $\widehat{\theta}$ and results in a far cheaper algorithm in practice.  The algorithm which
is used is based on Markovian stochastic approximation \cite{andr}.
The main contributions of this article are to develop the methodology of \cite{ub_grad_new} in the context of Bayesian inverse problems,  prove said unbiasedness and establish numerically that the associated estimation procedure is faster than the current state-of-the-art such as \cite{disc_models}.

This paper is structured as follows. In Section \ref{sec:model} we give details on the modeling framework that is considered in this paper.  In Section \ref{sec:comp} we describe our computational methodology.  In Section \ref{sec:theory} we give our mathematical result that the estimator is unbiased.  Finally in  Section \ref{sec:numerics}, we provide numerical simulations demonstrating the performance of our unbiased numerical scheme, for parameter estimation.

\section{Modeling}\label{sec:model}

\subsection{General Framework}

We are given a random variable $u\in\mathsf{E}$,  with prior density $p_{\theta}(u)$ and we recall
$\theta\in\Theta$ are a collection of unknown parameters of which the prior may depend on some or none of.
We assume access to data $y\in\mathsf{Y}$ with likelihood $p_{\theta}(y|u)$.  The objective is to compute the (assumed unique) maximzer of the marginal likelihood:
$$
p_{\theta}(y) = \int_{\mathsf{E}}p_{\theta}(y|u)p_{\theta}(u)du,
$$
where $du$ is some $\sigma-$finite measure on $(\mathsf{E},\mathscr{E})$. Note that $\mathsf{E}$
is often high-dimensional,  often needing MCMC/SMC methods to estimate $p_{\theta}(y)$.
In the notation of the introduction $\gamma_{\theta}(u) = p_{\theta}(y|u)p_{\theta}(u)$.  Under minimal
conditions,  it is well-known that
$$
\nabla_{\theta}\log\left\{p_{\theta}(y)\right\} = \int_{\mathsf{E}}\nabla_{\theta}\log\left\{\gamma_{\theta}(u)\right\}
\pi_{\theta}(du),
$$
where $\nabla_{\theta}$ is the gradient operator in $\theta$,  so that a strategy for estimating $\theta$ is to use gradient-based methods based upon sampling from $\pi_{\theta}$; see for instance \cite{disc_models,ub_bip}.   We note that,  as stated in \cite{ub_bip},  one prefers this technique to a fully Bayesian procedure (placing a prior on $\theta$) as the complexity of the posterior,  in terms of its surface,  can be very difficult to conduct sampling methods.

As stated in the introduction $p_{\theta}(y|u)$ is often related to the solution of a differential equation and can only be computed with a numerical error associated to the differential equation solver.  We assume that the latter has an accuracy associated to a scalar parameter $l\in\mathbb{N}_0$; as $l$ increases so does the accuracy.
To that end we can only work with $\pi_{\theta}^l(du) \propto p_{\theta}^l(y|u)p_{\theta}(u)du$,  where the superscript $l$ reflects the accuracy of the solver mentioned above.  Thus, at best we can only work with
$$
\nabla_{\theta}\log\left\{p_{\theta}^l(y)\right\} = \int_{\mathsf{E}}\nabla_{\theta}\log\left\{\gamma_{\theta}^l(u)\right\}
\pi_{\theta}^l(du).
$$
Throughout the article we are assuming that there is a unique $\theta^{\star}$ that maximizes 
$p_{\theta}(y)$ and in addition,  we shall denote by $\theta^{l,\star}$ the assumed unique maximizer
of $p_{\theta}^l(y)$.  One can relax the forthcoming discussion to the case that there are collection of local maxima of  $p_{\theta}(y)$ and $p_{\theta}^l(y)$,  but for simplicity of exposition,  we do not do this.
We now present a motivating example,  to illustrate this general framework.

\subsection{Motivating Example}
\label{sec:example}
To help motivate the problem of parameter estimation, in the context of Bayesian inverse problems, we provide a well-known, and common, example.
This example is a-typical within the field of inverse problems, which is related to the recovery of parameters, or a function, of an elliptic partial differential equation (PDE).
The underlying application is referred to as Darcy's law (or flow), which describes groundwater flow in a porous medium.
Specifically it models the relationship between the pressure $h$ and the flow rate defined as $q = - \frac{\Phi}{\upsilon} \nabla h$, where 
$\Phi$ denotes the permeability of the fluid and $\upsilon$ is the viscosity of the fluid. This can have a PDE representation by taking the divergence, 
which results in the following elliptic PDE over a Lipschitz domain \(D \subset \mathbb{R}^d\), for $d \geq 1$.
defined as
\begin{align*}
    -\nabla \cdot ( \Phi \nabla h) = f, \quad \text{in } D, 
\end{align*}  
such that $f$  is the source term, and the pressure $h$ is the solution of the PDE, where we have taken $\upsilon=1$. Depending on the inverse problem, one could be interested in either the function $f$ or the permeability $\Phi$ (from pointwise measurements of the solution $h$) which can take the random representation of \(\hat{\Phi}(x)\) is parameterized as:
\begin{equation*}
    \hat{\Phi}(x) = \bar{\Phi}(x) + \sum_{k=1}^K \Phi_k \sigma_k \varphi_k(x).
\end{equation*}
In this parameterization, \(\bar{\Phi}(x)\) represents the baseline diffusivity, \(\Phi_k\) are independent random variables that encode uncertainty, \(\sigma_k\) are weights scaling the basis functions \(\varphi_k(x)\), which control spatial heterogeneity in \(\hat{u}(x)\). What is important is to understand, is that these functions are often random, and thus some form of uncertainty quantification is required, which naturally imposes a Bayesian framework of  inverse problems. In the context of this example, and work, we will not consider inverse problems of functions, but rather parameters, which could either be related to the noise of the observations or hyperparameters of either the source term, or the permeability. We will make this more clear in Section \ref{sec:numerics}. 

\section{Computational Methodology}\label{sec:comp}

\subsection{Structure and Remarks}

In this section we detail our methodology to obtain an unbiased estimate of $\theta^{\star}$.  This consists of several methodologies which include Markovian stochastic approximation (MSA) in Section \ref{sec:msa},  unbiased MSA (UMSA) in Section \ref{sec:ub_msa},  UMSA development for Bayesian inverse problems in Section
\ref{sec:ub_bayes}. 
The approach is then summarized in Section \ref{sec:summ_meth}.

In Sections \ref{sec:msa}-\ref{sec:ub_bayes} and our numerical results in Section \ref{sec:numerics},  we proceed
as if $\Theta$ is an unbounded open set.  However,  in our theory we will only be able to consider $\Theta$ as
a bounded set and such a constraint requires a reprojection (e.g.~\cite{andr} and the references therein) of the forthcoming MSA method.   Reprojection is described in Section \ref{sec:theory},  but is not used in our numerical results and hence omitted from most of the presentation,  for brevity.

\subsection{Markovian Stochastic Approximation} \label{sec:msa}

An MSA scheme, developed in \cite{andr},  works as follows. Let $K_{\theta,l}:\mathsf{E}\times\mathscr{E}\rightarrow[0,1]$ be a Markov kernel, such that for any $\theta\in\Theta$, it admits $\pi_{\theta}^l$ as an invariant measure,  that is $\pi_{\theta}^l(du)=\int_{\mathsf{E}}\pi_{\theta}^l(du')K_{\theta,l}(u',du)$ (integration on the R.H.S.~is in the $u'$ variable).  In Section \ref{sec:ub_bayes} we will give a specific example of such a Markov kernel $K_{\theta,l}$.
For
each $\theta\in\Theta$, let $\nu_\theta^l$ be a probability measure on $(\mathsf{E},\mathscr{E})$. 
We shall use the notation 
$$
H^l(\theta',u) = \nabla_{\theta}\log\left\{\gamma_{\theta'}^l(u)\right\}.
$$
In MSA methods, one needs a sequence of step-sizes $(\phi_n)_{n\in\mathbb{N}}$,  which are a collection
of non-negative numbers,  such that $\sum_{n\in\mathbb{N}}\phi_n = \infty$,  $\sum_{n\in\mathbb{N}}\phi_n^2< +\infty$.  The MSA method is presented in Algorithm \ref{alg:SMA}.  In Algorithm \ref{alg:SMA} we do not specify any stopping rule,  which must be done (see e.g.~\cite{kush}).  \cite{andr3,andr} have proved that the iterates 
$\theta_n^l$ will converge to $\theta^{l,\star}$ under mathematical assumptions and in an appropriate probabilistic sense (almost sure convergence).
 
\begin{algorithm}[h!]
\caption{Markovian Stochastic Approximation}
\label{alg:SMA}
\begin{algorithmic}[1]
\State{Set $\theta_0^l\in\Theta$ and generate $U_0\sim\nu_{\theta_0}^l$, $n=1$.}
\State{Sample $U_n|(\theta_0^l,u_{0}),\dots,(\theta_{n-1}^l,u_{n-1})$ from $K_{\theta_{n-1}^l,l}(u_{n-1},\cdot)$.} \bigskip
\State{Update:
$$
\theta_n^l = \theta_{n-1}^l + \phi_n H^l(\theta_{n-1}^l,U_n).
$$
Set $n=n+1$ and go to the start of 2..}
\end{algorithmic}
\end{algorithm}

\subsection{Unbiased Markovian Stochastic Approximation}\label{sec:ub_msa}

Let $\mathbb{P}_L(l)$ be a positive probability on $\mathbb{N}_0$.  This significance of this
probability distribution will be to adopt a randomization scheme (\cite{mcl,rhee}) over the level of approximation
of $\pi_{\theta}^l$ and $H^l(\theta,u)$. The randomization methods in \cite{mcl,rhee} (see also the extensions in
\cite{disc_models,ub_bip,ub_pf}) generate a random level $l$ from $\mathbb{P}_L$ and then computes, independently
of the simulated $l$,  an unbiased estimate 
$\widehat{\theta}^{l,\star}-\widehat{\theta}^{l-1,\star}$ 
of $\theta^{l,\star}-\theta^{l-1,\star}$ (with $\widehat{\theta}^{-1,\star}=\theta^{-1,\star}:=0$).  
\cite{rhee} has shown that under an appropriate convergence of $\widehat{\theta}^{l,\star}$ to $\theta^{\star}$ as $l$ grows then one has
$$
\widehat{\theta}^{\star} = \frac{\widehat{\theta}^{l,\star}-\widehat{\theta}^{l-1,\star}}{\mathbb{P}_L(l)},
$$
is an unbiased estimator of $\theta_{\star}$.  In our case,  we will use MSA to obtain $\widehat{\theta}^{l,\star}$
and the estimates of the difference $\theta^{l,\star}-\theta^{l-1,\star}$.  However,  in general,  MSA estimates \emph{do not} produce unbiased estimates of $\theta^{l,\star}$.  However,  under conditions, one would have
\begin{equation}\label{eq:msa_conv}
\lim_{n\rightarrow\infty}\mathbb{E}[\theta_n^l] = \theta^{l,\star},
\end{equation}
where $\theta_n^l$ is the iterate in Algorithm \ref{alg:SMA}. \eqref{eq:msa_conv}
suggests that the double randomization scheme used in \cite{ub_pf} can be adopted and is now introduced.

Let $\boldsymbol{\theta}=(\theta,\theta')\in\Theta^2$ be given and consider $(K_{\theta,l},K_{\theta',l-1})$,  $l\in\mathbb{N}$ the Markov kernels
in Section \ref{sec:msa}.  We denote $\check{K}_{\boldsymbol{\theta},l}$ as a coupling of $(K_{\theta,l},K_{\theta',l-1})$,  of which one always exists.  
By coupling,  we mean for any fixed $\boldsymbol{\theta}\in\Theta^2$,  $l\in\mathbb{N}$,  $\mathsf{A}\in\mathscr{E}$
and $(u,u')\in\mathsf{E}^2$ that
\begin{eqnarray*}
\int_{\mathsf{A}\times\mathsf{E}}\check{K}_{\boldsymbol{\theta},l}\left((u,u'),d(\bar{u},\bar{u}')\right) & = & 
\int_{\mathsf{A}} K_{\theta,l}(u,d\bar{u}), \\
\int_{\mathsf{E}\times\mathsf{A}}\check{K}_{\boldsymbol{\theta},l}\left((u,u'),d(\bar{u},\bar{u}')\right) & = & 
\int_{\mathsf{A}} K_{\theta',l-1}(u',d\bar{u}'),
\end{eqnarray*}
so for instance
$$
\check{K}_{\boldsymbol{\theta},l}\left((u,u'),d(\bar{u},\bar{u}')\right) = K_{\theta,l}(u,d\bar{u})K_{\theta',l-1}(u',d\bar{u}'),
$$
is an example.
In the context of Bayesian inverse problems,
we describe a particular $\check{K}_{\boldsymbol{\theta},l}$ in Section \ref{sec:ub_bayes}.
Similarly,  let $\check{\nu}_{\boldsymbol{\theta}}^l$ be any coupling of $(\nu_{\theta}^l,\nu_{\theta'}^{l-1})$.
Then let $\{N_p\}_{p\in\mathbb{N}_0}$ be
a sequence of increasing natural numbers, converging to infinity.  Finally $\mathbb{P}_P$ be any positive probability on $\mathbb{N}_0$.  The ingredients here are then:
\begin{itemize}
\item{Randomize over $l$ the level of approximation and then $p$ of which $N_p$ will be the number of steps
used in a coupled MSA method.}
\item{Run a coupled MSA method at level $l$ for $N_p$ iterations.}
\end{itemize}
Algorithm \ref{alg:USMA} formally describes what is used to give,  under assumptions,  unbiased estimates of 
of $\theta^{\star}$. 
As noted in \cite{ub_grad_new},  
Algorithm \ref{alg:USMA} can be run $M-$times in parallel and averaged to reduce the variance of the estimator, if it exists. 

\begin{algorithm}[h!]
\caption{Unbiased Markovian Stochastic Approximation (UMSA)}
\label{alg:USMA}
\begin{algorithmic}[1]
\item{Sample $l$ from $\mathbb{P}_L$ and $p$ from $\mathbb{P}_P$.}
\item If $l=0$ perform the following:
\begin{itemize}
\item{Set $\theta_0^l\in\Theta$, $n=1$ and generate $U_0\sim\nu_{\theta_0}^l$.}
\item{Sample $U_n|(\theta_0^l,u_{0}),\dots,(\theta_{n-1}^l,u_{n-1})$ from $K_{\theta_{n-1}^l,l}(u_{n-1},\cdot)$.}
\item{Update:
$$
\theta_n^l = \theta_{n-1}^l + \phi_n H^l(\theta_{n-1}^l,U_n).
$$
If $n=N_p$ go to the next bullet point, otherwise set $n=n+1$ and go back to the second bullet point.}
\item{If $p=0$ return
$$
\widehat{\theta}^{\star} = \frac{\theta_{N_p}^l}{\mathbb{P}_P(p)\mathbb{P}_L(l)},
$$
otherwise  return
$$
\widehat{\theta}^{\star} = \frac{\theta_{N_p}^l-\theta_{N_{p-1}}^l}{\mathbb{P}_P(p)\mathbb{P}_L(l)}.
$$
}
\end{itemize}

\item Otherwise perform the following:
\begin{itemize}
\item Set $\theta_0^l=\theta_0^{l-1}\in\Theta$,  $\boldsymbol{
\theta}_0^l=(\theta_0^l,\theta_0^{l-1})$,  $n=1$ and generate $(U_0^l,U_0^{l-1})\sim\check{\nu}_{\boldsymbol{\theta}_0^l}^l$.
\item Sample $(U_n^l,U_{n}^{l-1})\Big|(\boldsymbol{\theta}_0^l,u_{0}^l,u_{0}^{l-1}),\dots,(\boldsymbol{\theta}_{n-1}^l,u_{n-1}^l,u_{n-1}^{l-1})$ from
$\check{K}_{\boldsymbol{\theta}_{n-1}^l,l}\left((u_{n-1}^l,u_{n-1}^{l-1}),\cdot\right)$.
\item Update:
\begin{eqnarray*}
\theta_n^l & = & \theta_{n-1}^l + \phi_n H^l(\theta_{n-1}^l,X_n^l), \\
\theta_n^{l-1} & = & \theta_{n-1}^{l-1} + \phi_n H^{l-1}(\theta_{n-1}^{l-1},X_n^{l-1})
\end{eqnarray*}
with $\boldsymbol{
\theta}_n^l=(\theta_n^l,\theta_n^{l-1})$.  
If $n=N_p$ go to the next bullet point, otherwise set $n=n+1$ and go back to the second bullet point.
\item If $p=0$ return
$$
\widehat{\theta}^{\star} = \frac{\theta_{N_p}^l-\theta_{N_p}^{l-1}}{\mathbb{P}_P(p)\mathbb{P}_L(l)},
$$
otherwise  return
$$
\widehat{\theta}^{\star} = \frac{\theta_{N_p}^l-\theta_{N_{p}}^{l-1}-\{\theta_{N_{p-1}}^l-\theta_{N_{p-1}}^{l-1}\}}{\mathbb{P}_P(p)\mathbb{P}_L(l)}.
$$
\end{itemize}
\end{algorithmic}
\end{algorithm}

\subsection{Methodology for Bayesian Inverse Problems}\label{sec:ub_bayes}

\subsubsection{Metropolis-Hastings Method}

In the following, we will consider $(\mathsf{E},\mathscr{E})=(\mathbb{R}^d,\mathscr{B}(\mathbb{R}^d))$ ,  $\mathscr{B}(\mathbb{R}^d)$ are the Borel sets,   and the well-known Metropolis-Hastings (MH) method with $l\in\mathbb{N}_0$ fixed.  $\pi_{\theta}^l(du)\propto\gamma_{\theta}^l(u)du$ where $du$ is $d-$dimensional Lebesgue measure.  We remark that several alternatives to MH are possible,  but we try to keep the article as simple as possible and thus focus on this case.

The MH method requires a proposal Markov kernel $Q_{\theta,l}:\mathsf{E}\times\mathscr{E}\rightarrow[0,1]$
which we shall suppose is $\pi_{\theta}^l-$irreducible.   
We write $Q_{\theta,l}(u,du') =q_{\theta,l}(u,u')du'$ with 
$q_{\theta,l}(u,u')$ an assumed positive density.
Then set for any $(\theta,u,u')\in\Theta\times\mathsf{E}^2$
$$
\alpha_{\theta}^l(u,u') = \min\left\{1,\frac{\gamma_{\theta}^l(u')q_{\theta}(u',u)}{\gamma_{\theta}^l(u)q_{\theta}(u,u')}\right\}.
$$
The MH kernel which leaves $\pi_{\theta}^l$ invariant is then well-known and can be written as
$$
K_{\theta,l}(u,du') = \alpha_{\theta}^l(u,u')Q_{\theta,l}(u,du') + \delta_{\{u\}}(du')r_{\theta}^l(u)
$$
where $\delta_{\{u\}}(du')$ is the Dirac measure concentrated on the set $\{u\}$ and 
$r_{\theta}^l(u)=1-\int_{\mathsf{E}}\alpha_{\theta}^l(u,u')Q_{\theta,l}(u,du')$.
A well-known proposal for Bayesian inverse problems is the 
Pre-conditioned Crank-Nicolson (pCN) \cite{pcn,neal}
\begin{align}\label{eqn:pcn_proposal}
	Q_{\theta,l}(u,du') = \psi_d(u';\rho_{\theta,l}u,(1-\rho_{\theta,l}^2)\Sigma_{\theta,l})du',
\end{align}
where $\Sigma_{\theta,l}=\sigma_{\theta,l}\sigma_{\theta,l}^{\top}$, $\sigma_{\theta,l}$ is an invertible $d\times d$ matrix,
$\psi(u;\mu,\Sigma)$ is the $d-$dimensional Gaussian density function with mean vector $\mu$ and covariance
matrix $\Sigma$ and $\rho_{\theta,l}\in(-1,1)$.   Many other types of proposals are possible for Bayesian inverse problems;
see e.g.~\cite{disc_models} and the references therein.

\subsubsection{Coupled Metropolis-Hastings} 

We now consider coupling MH kernels in general and then how this can be done for pCN (Section \ref{sec:sync_coup}).  In the notation used in Section \ref{sec:ub_msa} for which we denote $\check{Q}_{\boldsymbol{\theta},l}$ as a coupling of $(Q_{\theta,l},Q_{\theta',l-1})$.  Then for any given 
$(\boldsymbol{\theta},u,u')\in\Theta^2\times\mathsf{E}^2$ we can simulate a coupling of two MH kernels in the following way:
\begin{itemize}
\item{Generate $(\bar{U},\bar{U}')|u,u'\sim \check{Q}_{\boldsymbol{\theta},l}\left((u,u'),\cdot\right)$.}
\item{Generate $V\sim\mathcal{U}_{[0,1]}$ (uniform distribution on $[0,1]$):
\begin{itemize}
\item{If $v<\alpha_{\theta}^l(u,\bar{u})$ then set $\tilde{U}=\bar{u}$, otherwise set $\tilde{U}=u$.}
\item{If $v<\alpha_{\theta}^{l-1}(u',\bar{u}')$ then set $\tilde{U}'=\bar{u}'$, otherwise set $\tilde{U}'=u'$.}
\end{itemize}
}
\item{Return $(\tilde{u},\tilde{u}')$.}
\end{itemize}

\subsubsection{Synchronous Pre-conditioned Crank Nicolson}\label{sec:sync_coup}

In the context of pCN one can use synchronous pCN (see e.g.~\cite{disc_models}),  which can be described as
follows.  We want to sample a coupling of $(Q_{\theta,l}(u,\cdot),Q_{\theta',l'}(u',\cdot))$,  which proceeds by 
generating $Z\sim\mathcal{N}_d(0,I_d)$ ($d-$dimensional Gaussian distribution,  0 mean and covariance the $d\times d$ identity matrix).  
Set
\begin{eqnarray*}
\bar{U} & =&  \rho_{\theta,l} u + \sqrt{1-\rho_{\theta,l}^{2}}\sigma_{\theta,l}z, \\
\bar{U}' & =&  \rho_{\theta',l-1} u' + \sqrt{1-\rho_{\theta',l-1}^{2}}\sigma_{\theta,l-1}z.
\end{eqnarray*}
Then $(\bar{U},\bar{U}')$ have been sampled from a synchronous pCN coupling.  Note that unlike the methodology in \cite{disc_models},  there is no requirement to construct an approach in which samples at level $l$ and $l-1$ are equal.  

\begin{rem}
We remark that our methodology does not solely rely, or work, on the pCN coupling, but can be applied to other couplings based on well-known MCMC kernels. These can include the Metropolis adjusted Langevin algorithm (MALA) or Hamiltonian Monte Carlo (HMC). However, for this work we simply consider pCN as a base to verify our new methodology. Further details on those additional synchronous couplings can be found in \cite{disc_models}.
\end{rem}

\subsection{Summary of Approach}\label{sec:summ_meth}

The method that we propose is then as follows.  
We shall assume that $\mathbb{P}_L,\mathbb{P}_p$ and $(N_p)_{p\in\mathbb{N}_0}$
are given; we shall discuss their choice below.
For $i\in\{1,\dots,M\}$
do the following:
\begin{itemize}
\item{Using the Markov kernels that are described in Section \ref{sec:ub_bayes}, run Algorithm \ref{alg:USMA}.}
\item{Return the estimator $\widehat{\theta}^{i,\star}$.}
\end{itemize}
Then the final estimator is $\tfrac{1}{M}\sum_{i=1}^M\widehat{\theta}^{i,\star}$.

\section{Mathematical Results}\label{sec:theory}

Throughout this section we are assuming that $\Theta$ is open and bounded. In addition,  $(\mathsf{E},\mathscr{E})=(\mathbb{R}^d,\mathscr{B}(\mathbb{R}^d))$.  We consider a MH kernel $K_{\theta,l}$ of 
symmetric random walk, that is $\rho_{\theta,l}=1$ in \eqref{eqn:pcn_proposal}.

\subsection{Modified Algorithm}

In order to introduce our mathematical results we now introduce the idea of reprojection.
Suppose we have sequence of increasing compact sets $\{\Theta_n\}_{n\in\mathbb{N}_0}$ such that $\bigcup_n \Theta_n = \Theta$ and $\Theta_n\subset\textrm{int}(\Theta_{n+1})$. Let $\{\epsilon_n\}_{n\in\mathbb{N}}$ be a sequence of positive real numbers that converges to $0$. For every $l\in\mathbb{N}_0$ we define the stochastic approximation with reprojections,  such as in \cite[Section 3.3]{andr3},  as a sequence of pairs $(\theta_n^l, U_n)\in\Theta\times\mathsf{E}$ defined iteratively by
\begin{equation}\label{eq:msa}
\begin{split}
    &\textrm{ Sample } U_{n+1}\sim K_{\theta_n^l,l}(u_n, \cdot)\\
    &\Tilde{\theta}_{n+1}^l = \theta_n^l+\phi_n H^l(\theta^l_n,U_{n+1})\\
    & \theta_{n+1}^l = \begin{cases}
        \Tilde{\theta}_{n+1}^l, \quad|\Tilde{\theta}_{n+1}^l-\theta_n^l| < \epsilon_n\;\textrm{  and  }\; \theta^l_{n+1}\in \Theta_{n+1}\\
        \theta_0, \quad \textrm{otherwise}
    \end{cases}
\end{split}
\end{equation}
where $(\theta_0^l,U_0)\in\Theta_0\times\mathsf{E}$ is an arbitrary initial pair. 

\subsection{Assumptions}

We now give our assumptions for which to state our main result.  Below $\mathcal{C}(\mathsf{E},\mathbb{R})$
is the collection of continuously differentiable real-valued functions on $\mathsf{E}$.  If $\Sigma$ is a square matrix then $|\Sigma|$ denotes the euclidean norm.  For $(u,v)\in\mathsf{E}^2$, $<u,v>$ is the inner product.

\begin{hypA}\label{ass:1}
\begin{enumerate}
\item{We have
$$
\sup_{u\in\mathsf{E}}\sup_{l\in\mathbb{N}_0}\sup_{\theta\in\Theta} \frac{\gamma_{\theta}^l(u)}{
\int_{\mathsf{E}}\gamma_{\theta}^l(u')du} < + \infty.
$$
}
\item{For any compact $\mathsf{A}\in\mathscr{E}$
$$
\inf_{u\in\mathsf{A}}\inf_{l\in\mathbb{N}_0}\inf_{\theta\in\Theta} \frac{\gamma_{\theta}^l(u)}{
\int_{\mathsf{E}}\gamma_{\theta}^l(u')du}  \geq C > 0.
$$
}
\item{For every $(\theta,l)\in\Theta\times\mathbb{N}_0$,  $\tfrac{\gamma_{\theta}^l(u)}{
\int_{\mathsf{E}}\gamma_{\theta}^l(u')du}\in\mathcal{C}(\mathsf{E},\mathbb{R})$.}
\item{We have
$$
\lim_{|u|\rightarrow+\infty}\inf_{l\in\mathbb{N}_0}\inf_{\theta\in\Theta}\left<
\frac{u}{|u|},
\nabla_u\log\left\{
 \frac{\gamma_{\theta}^l(u)}{
\int_{\mathsf{E}}\gamma_{\theta}^l(u')du} 
\right\}
\right>=-\infty.
$$
}
\item{We have
$$
\lim_{|u|\rightarrow+\infty}\sup_{l\in\mathbb{N}_0}\sup_{\theta\in\Theta}\left<
\frac{u}{|u|},
\nabla_u\left\{
\frac{\gamma_{\theta}^l(u)}{
\int_{\mathsf{E}}\gamma_{\theta}^l(u')du}\right\} 
\left|
\nabla_u\left\{
\frac{\gamma_{\theta}^l(u)}{
\int_{\mathsf{E}}\gamma_{\theta}^l(u')du}\right\} 
\right|^{-1}
\right><0.
$$
}
\end{enumerate}
\end{hypA}

\begin{hypA}\label{ass:2}
There exists $0<\underline{C}<\overline{C}<+\infty$, such that for any $l\in\mathbb{N}_0$ there exists a $V_l:\mathsf{E}\rightarrow[1,\infty)$ with:
$$
\underline{C}V_l(u) \leq 
\left\{\sup_{u\in\mathsf{E}}\sup_{l\in\mathbb{N}_0}\sup_{\theta\in\Theta} \frac{\gamma_{\theta}^l(u)}{
\int_{\mathsf{E}}\gamma_{\theta}^l(u')du}\right\}
\left\{\frac{\gamma_{\theta}^l(u)}{
\int_{\mathsf{E}}\gamma_{\theta}^l(u')du}\right\}^{-1}
 \leq \overline{C}V_l(u)
$$
for every $(\theta,u)\in\Theta\times\mathsf{E}$.
\end{hypA}

\begin{hypA}\label{ass:3}
Let $V_l$ be as in (A\ref{ass:2}).   There exists a $(C,\beta,\eta)\in(0,\infty)\times[0,2]\times(0,1)$ such that:
\begin{enumerate}
\item{$\sup_{\theta\in\Theta}|H^l(\theta,u)|\leq CV_l(u)^{\eta/2}$ for every $u\in\mathsf{E}$}
\item{$\sup_{(\theta,\theta')\in\Theta^2}|H^l(\theta,u)-H^l(\theta',u)|\leq C|\theta-\theta'|V_l(u)^{\eta/2}$ for every $u\in\mathsf{E}$.}
\end{enumerate}
\end{hypA}

\begin{hypA}\label{ass:4}
Let $\beta$ be as in (A\ref{ass:3}). There exists a $C<+\infty$ such that
$$
\sup_{l\in\mathbb{N}_0}|\Sigma_{\theta,l}-\Sigma_{\theta',l}| \leq C|\theta-\theta'|^{\beta}.
$$
\end{hypA}

\begin{hypA}\label{ass:5}
Let $\beta$ be as in (A\ref{ass:3}).  There exists a $\alpha\in(0,\beta)$ such that
$$
\sum_{n\in\mathbb{N}_0} \left\{\phi_n^2+\phi_n\epsilon_n^{\alpha}+ \left(\frac{\phi_n}{\epsilon_n}\right)^2\right\} <+\infty.
$$
\end{hypA}

The assumptions (A\ref{ass:1},\ref{ass:3}-\ref{ass:5}) are fairly standard in the study of stochastic approximation
and geometrically ergodic Markov chains; see for instance \cite{andr1,andr3,andr}.  Assumption (A\ref{ass:2}) is less standard and essentially states that the regular Lyapunov function for a symmetric random walk MH is sandwiched between a Lyaponuv function that is $\theta$ independent.  This seems reasonable as $\Theta$ is a bounded set.

\subsection{Result and Discussion}

Below we use $\mathbb{E}[\cdot]$ to denote the expectation operator associated to the law of the algorithm, which uses the modified (reprojection) update scheme,  ultimately used to estimate $\theta^{\star}$.

\begin{prop}
\label{prop:1}
Assume (A\ref{ass:1}-\ref{ass:5}).  Then we have that
$$
\mathbb{E}[\widehat{\theta}^{\star}] = \theta^{\star}.
$$
\end{prop}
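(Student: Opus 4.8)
The plan is to exploit the double randomization structure of Algorithm \ref{alg:USMA}, reducing unbiasedness to two inputs: the mean convergence of the reprojected (coupled) Markovian stochastic approximation, and the convergence of the level-wise maximizers $\theta^{l,\star}$ to $\theta^{\star}$. The reprojection scheme \eqref{eq:msa} together with (A\ref{ass:1}--\ref{ass:5}) is what makes these available: (A\ref{ass:1}--\ref{ass:2}) force geometric ergodicity of the symmetric random walk Metropolis--Hastings kernels $K_{\theta,l}$ through the common Lyapunov function $V_l$, while (A\ref{ass:3}--\ref{ass:5}) bound the increment $H^l$, the proposal covariance $\Sigma_{\theta,l}$, and the step sizes so that the stochastic-approximation-with-reprojection theory of \cite{andr3,andr} applies. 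The first thing I would record is the mean-convergence statement \eqref{eq:msa_conv}, $\lim_{n\to\infty}\mathbb{E}[\theta_n^l]=\theta^{l,\star}$, applied to each marginal recursion of the coupled scheme; since expectation is linear, this already gives $\lim_{n\to\infty}\mathbb{E}[\theta_n^l-\theta_n^{l-1}]=\theta^{l,\star}-\theta^{l-1,\star}$, so the coupling itself is needed only later, for summability.

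Writing $\Delta_{l,p}$ for the raw numerator returned by Algorithm \ref{alg:USMA} before the division by $\mathbb{P}_L(l)\mathbb{P}_P(p)$ (so $\Delta_{0,0}=\theta_{N_0}^0$, $\Delta_{0,p}=\theta_{N_p}^0-\theta_{N_{p-1}}^0$, $\Delta_{l,0}=\theta_{N_0}^l-\theta_{N_0}^{l-1}$, and $\Delta_{l,p}=(\theta_{N_p}^l-\theta_{N_p}^{l-1})-(\theta_{N_{p-1}}^l-\theta_{N_{p-1}}^{l-1})$ for $l,p\geq 1$), the single analytic ingredient I would isolate is the absolute summability
$$
\sum_{l\geq 0}\sum_{p\geq 0}\mathbb{E}\!\left|\Delta_{l,p}\right|<+\infty .
$$
Granting this, Fubini for series licenses the interchange of the expectation with the randomization sums and permits evaluating $\mathbb{E}[\widehat{\theta}^{\star}]=\sum_{l,p}\mathbb{E}[\Delta_{l,p}]$ by iterated telescoping in any order.

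I would then carry out the telescoping. Summing over $p\sim\mathbb{P}_P$ for fixed $l$, with $a_p^l:=\mathbb{E}[\theta_{N_p}^l]$ and $b_p^l:=\mathbb{E}[\theta_{N_p}^l-\theta_{N_p}^{l-1}]$ (well defined independently of the sampled $p$, since the trajectory law depends only on the step index), one obtains
$$
\mathbb{E}[\widehat{\theta}^{\star}\mid l=0]=\frac{1}{\mathbb{P}_L(0)}\lim_{p\to\infty}a_p^0=\frac{\theta^{0,\star}}{\mathbb{P}_L(0)},\qquad
\mathbb{E}[\widehat{\theta}^{\star}\mid l]=\frac{1}{\mathbb{P}_L(l)}\lim_{p\to\infty}b_p^l=\frac{\theta^{l,\star}-\theta^{l-1,\star}}{\mathbb{P}_L(l)} .
$$
Summing these over $l\sim\mathbb{P}_L$ and telescoping once more gives
$$
\mathbb{E}[\widehat{\theta}^{\star}]=\sum_{l\geq 0}\mathbb{P}_L(l)\,\mathbb{E}[\widehat{\theta}^{\star}\mid l]=\theta^{0,\star}+\sum_{l\geq 1}\left(\theta^{l,\star}-\theta^{l-1,\star}\right)=\lim_{l\to\infty}\theta^{l,\star}=\theta^{\star},
$$
where the final equality uses that the level-wise maximizers converge to the true maximizer, a consequence of the approximation property \eqref{eq:conv} and the assumed uniqueness of $\theta^{\star}$.

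The main obstacle is the summability bound displayed above; everything else is bookkeeping. Establishing it requires quantitative $L_1$ rates for the reprojected scheme, uniform in $l$. For the inner sum I would combine the geometric ergodicity from (A\ref{ass:1}--\ref{ass:2}) with the Lyapunov moment bound $|H^l|\leq CV_l^{\eta/2}$ (with $\eta<1$) from (A\ref{ass:3}) and the step-size condition (A\ref{ass:5}) to bound $\mathbb{E}|\theta_{N_p}^l-\theta_{N_{p-1}}^l|$ at a rate that is summable in $p$ for a sufficiently fast choice of the burn-in lengths $N_p$. For the outer sum I would use the synchronous pCN coupling of Section \ref{sec:sync_coup} together with the Lipschitz continuity of $\Sigma_{\theta,l}$ in (A\ref{ass:4}) (and of $H^l$ in (A\ref{ass:3})) to force the level-$l$ and level-$(l-1)$ chains, and hence the coupled increments $\mathbb{E}|\Delta_{l,p}|$, to decay in $l$. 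Obtaining these rates \emph{uniformly in the level} is the technical heart of the argument, and it is precisely here that the common Lyapunov function $V_l$ and the coupling construction are indispensable; this is the step that ties the abstract unbiasedness framework of \cite{ub_grad_new} to the present Bayesian inverse problem setting.
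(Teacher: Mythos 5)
Your proposal is correct in outline and follows essentially the same route as the paper: the paper's proof is a pointer that combines the stability/mean-convergence theory for reprojected Markovian stochastic approximation (verifying \cite[Theorem 5.5]{andr3} via \cite{andr1} under (A\ref{ass:1}--\ref{ass:5})) with the telescoping/Fubini debiasing argument from the proof of \cite[Theorem 4.1]{ub_grad_new}, which is precisely the two-ingredient decomposition you describe. Your sketch simply unpacks that combination explicitly, correctly identifying the absolute summability of the increments (uniformly in the level) as the technical step that the paper, too, defers to the cited references.
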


\begin{proof}
The result follows as one can verify \cite[Theorem 5.5]{andr3} via using modifications of Proposition 12 and Lemma 13 of 
\cite{andr1}.  This can be combined with the proof of \cite[Theorem 4.1]{ub_grad_new} to conclude.
We remark that the proving Proposition 12 and Lemma 13 of 
\cite{andr1} in our context is fairly simple,  needing some trivial modifications of the calculations in those papers;
these are omitted for brevity.
\end{proof}

The result does not tell use anything about how to choose our simulation parameters.
Based upon work in \cite{maama}
we conjecture that, under appropriate assumptions, the variance of our estimator is upper-bounded by an expression that is
$$
\mathcal{O}\left(
\frac{1}{\mathbb{P}_L(0)}\sum_{p=0}^{\infty}\frac{1}{\mathbb{P}_P(p)N_p} + 
\sum_{l=1}^L\sum_{p=0}^{\infty}\frac{1}{\mathbb{P}_L(l)\mathbb{P}_P(p)}
\frac{\Delta_l^{\zeta}}{N_p}
\right),
$$
for $\zeta>0$ and $\Delta_l=2^{-l}$; the latter is associated to the precision of our approximation of $\pi_{\theta}$.
Note that the rate for convergence of SA can be seen in \cite{fort} and as shown there,  we are taking the optimal rate which is related to the step-size.
Then just as in \cite{ub_bip} one can choose: 
\begin{align*}
N_p&=2^p,  \mathbb{P}_{L}(l)\propto\Delta_l^{\rho\zeta},  
\rho\in(0,1), \\ \mathbb{P}_P(p)&\propto 2^{-p}(p+1)\log_2(p+2)^2,
\end{align*}  which would ensure finite variance.  Such a result remains to be proved,  however.  Similar to \cite{ub_bip},  our new estimator is expected to have finite variance and unbiasedness,  with infinite expected cost,  but finite cost with high probability;  see the discussion \cite[Section 4]{ub_bip} for details.

\section{Numerical Experiments}
\label{sec:numerics}
In this section we test our methodology and algorithms using two distinct inverse problem models: one based from an elliptic PDE and another from an ODE relevant to the field of epidemiology. These models serve to assess the effectiveness of our approach across diverse domains and to compare it with the methodology proposed in \cite{disc_models}. For our numerical experiments we will estimate static parameters using our proposed algorithm UMSA and compare it with the algorithm proposed in \cite{disc_models}, which we refer to as UEDM (unbiased estimation of discretized models), within the context of Bayesian inverse problems. 

\subsection{Elliptic PDE Inverse Problem}

We now present our first numerical experiment, based on the motivating example we provided in Section \ref{sec:example}. 
Recall that model problem, is to consider the inverse problem associated with 
$$
-\nabla \cdot (\Phi \nabla h) = f \quad \in D,
$$
where \( \Phi = 1 \) represents a constant diffusion coefficient, the forcing term \(f(t; X)\) is given by
$$
f(t; X) = X_1 \sin(2t) + X_2 \sin(t),
$$
and the analytical solution to this PDE is given as  $h(t; X) = \frac{1}{4} X_1 \sin(2t) + X_2 \sin(t)$, and $D=[0,2\pi]$.
We consider zero-Dirichlet boundary conditions $h=0$. 
The observations are taken at \( J = 50 \) equidistant points within the domain, with the observation times defined as:
$$
t_j = 2\pi \frac{2j - 1}{2J}, \quad j \in \{1, \ldots, J\}.
$$
The observed data $y$ is modeled by adding Gaussian noise: \quad $
y | X \sim \mathcal{N}(G(X), \theta^{-1} I),
$
where $\theta$ is precision parameter we are interested in recovering which we set as \( \theta = \theta^{\star} =100 \). Furthermore,  \( G(X) \) is the forward model matrix that depends on the parameters \( X = (X_1, X_2) \). The parameters \( X \) are assumed to follow a Gaussian prior distribution:
$
X \sim \mathcal{N}(0, 16I).
$
The forward model matrix \(G\) maps the parameters \(X\) to the observations. The individual elements of the matrix are given by
$$
G_{j,1} = \frac{1}{4} \sin(2t_j), \quad G_{j,2} = \sin(t_j). 
$$
We solve the PDE by using a finite difference scheme with a mesh size chosen as $
\Delta_l = 2\pi \times 2^{-l}.
$
the posterior distribution is defined as
$
p_{\theta}(X|y) = \mathcal{N}(\mu, \Sigma),
$
where the mean and inverse-covariance (precision) of the posterior are
$$
\mu = \theta \Sigma G^\top y, \quad \Sigma^{-1} = \theta G^\top G + 16^{-1}I_2.
$$
We establish a discretization scheme by setting the minimum discretization level \( l_{{min}} = 2 \).
To obtain robust statistical estimates, we run our algorithm in parallel for multiple repetitions \( \overline{M} \). Specifically, we consider values of \( \overline{M} \) drawn from the set  $\{ x \in \mathbb{Z}^{+} \mid x = 2^{\mathsf{p}}, \, 2 \leq \mathsf{p} \leq 11 \} $, which corresponds to powers of $2$ from \( 2^2 \) up to \( 2^{11} \). For each chosen \( \overline{M} \), this approach allows for \( \overline{M} \) independent, parallel runs of the algorithm, which not only enhances computational efficiency but also facilitates statistical averaging to improve the reliability of parameter estimation.
In these simulations, the parameters of interest we aim to estimate are denoted by  $\hat{\theta}_1, \dots, \hat{\theta}_{\overline{M}}$, where each  $\hat{\theta}_j $ represents an individual estimate from a single run of the algorithm. By averaging these estimates, we define an aggregated estimator $ \hat{\theta}_{\overline{M}}^{\star} = \frac{1}{\overline{M}} \sum_{j=1}^{\overline{M}} \hat{\theta}_j $. This averaged estimator $\hat{\theta}_{\overline{M}}^{\star} $ provides a more stable and accurate representation of the underlying parameter, reducing the variance associated with any single run and leveraging the law of large numbers.
To evaluate the accuracy of our aggregated estimator  $\hat{\theta}_{\overline{M}}^{\star} $, we compute the mean squared error (MSE) using the formula:
$$
\text{MSE} = \frac{1}{50} \sum_{i=1}^{50} \left| \hat{\theta}_{\overline{M}}^{i,\star} - \theta^{\star} \right|^2.
$$
To optimize the algorithms, we set parameters as follows:  $N_p = 2^p $, the probability distributions for level selection are given by  $\mathbb{P}_{L}(l) = \Delta_l^{\rho\zeta} \; \mathbf{1}_{ \{ l_{min}, ..., l_{max} \} } (l)$ for  $l_{min}=2,  \ l_{max} =9 $, with $\rho \in(0,1)$,  $\zeta > 0$  and $\mathbf{1}$ represents the indicator function. The distribution for the parameter $ p $ is $ \mathbb{P}_P(p) \propto 2^{-p}(p+1)\log_2(p+2)^2$.  For our experiments we choose $\zeta=1$. 
Our numerical simulations, for the PDE example, are provided in Figures \ref{fig:semi_log_plot}-\ref{fig:boxPDE}. For our pCN kernels we specify the coupling parameters as \(\rho_{\theta , l} = 0.95\) and set \(\sigma_{\theta , l} = 4.0 I\) for all \(l\).
Figure~\ref{fig:semi_log_plot} demonstrates the linear convergence of the forward model, while Figure~\ref{fig:semi_log_plot2} shows the trade-off between MSE and computational cost for the parameter \(\theta\). Here, the computational cost is defined as the cumulative sum of the costs of all $\overline{M}$ parallel processes, where each process contributes to the overall effort required by the algorithm. The plot demonstrates an MSE rate of $\text{MSE} = \mathcal{O} \big( \frac{1}{\overline{M}} \big)$. 

\begin{figure}[h!]
\centering
\includegraphics[width=0.85\textwidth]{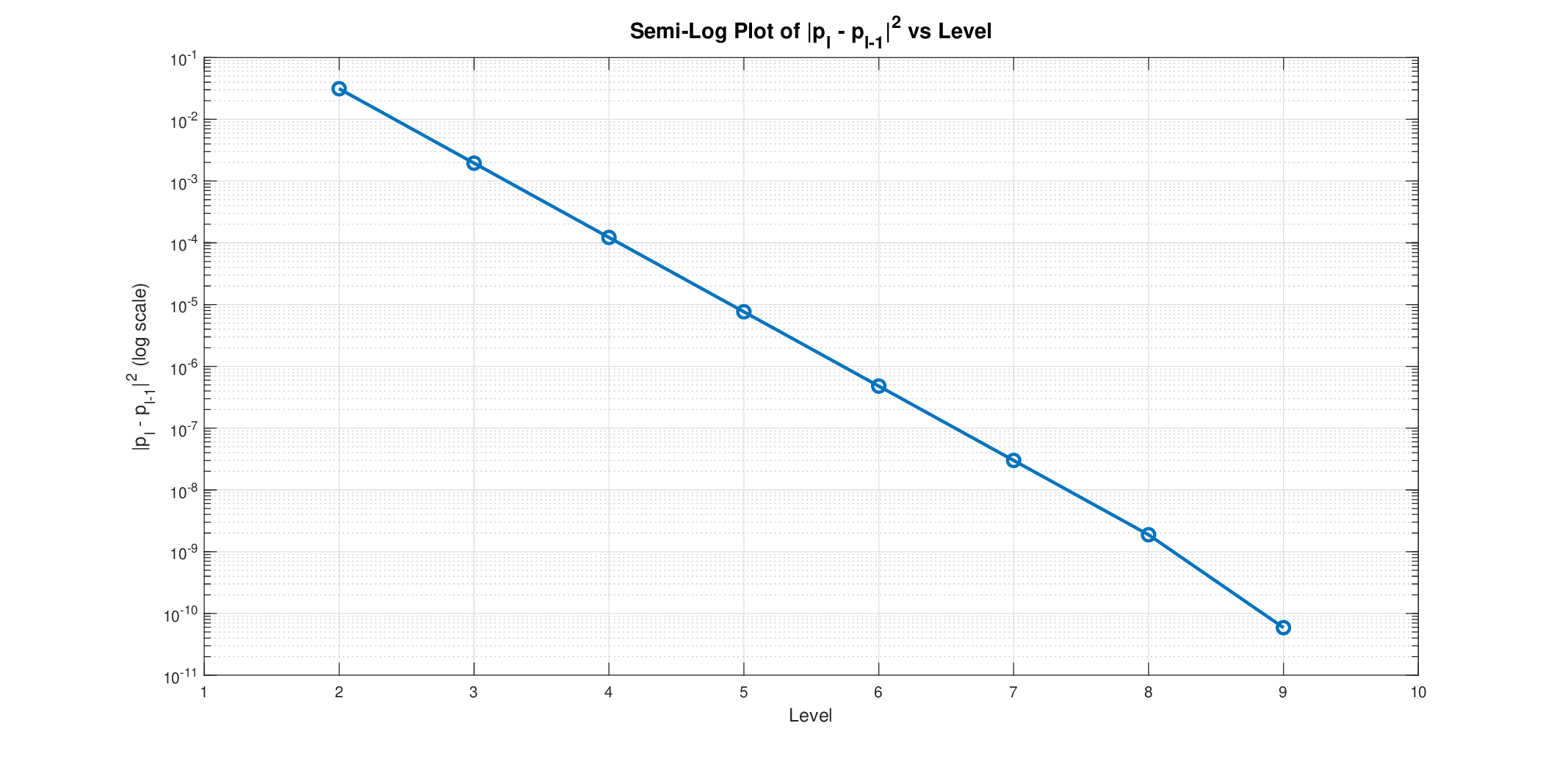} 
\caption{Semi-logarithmic plot of \(|p_l - p_{l-1}|^2\) versus level, where $p_l := h_l$, is depicted. The $x$-axis represents the level, and the $y$-axis shows the squared difference between consecutive \(p_l\) values.}
\label{fig:semi_log_plot}
\end{figure}

\begin{figure}[h!]
\centering
\includegraphics[width=0.85\textwidth]{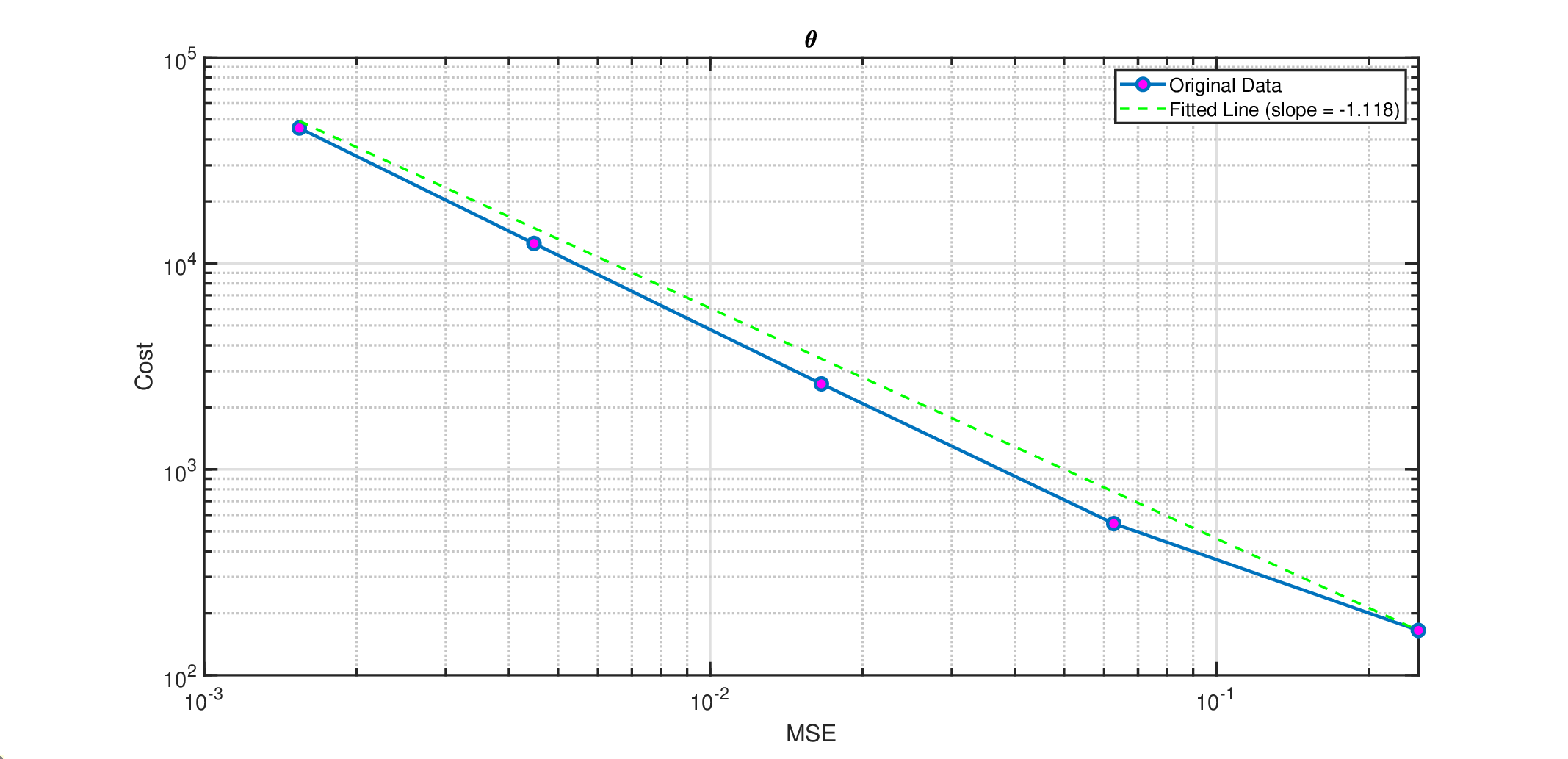} 
\caption{MSE versus computational cost for the parameter \(\theta\), illustrating the trade-off between accuracy and cost.}
\label{fig:semi_log_plot2}
\end{figure}

\begin{figure}[h!]
\centering
\includegraphics[width=0.95\textwidth]{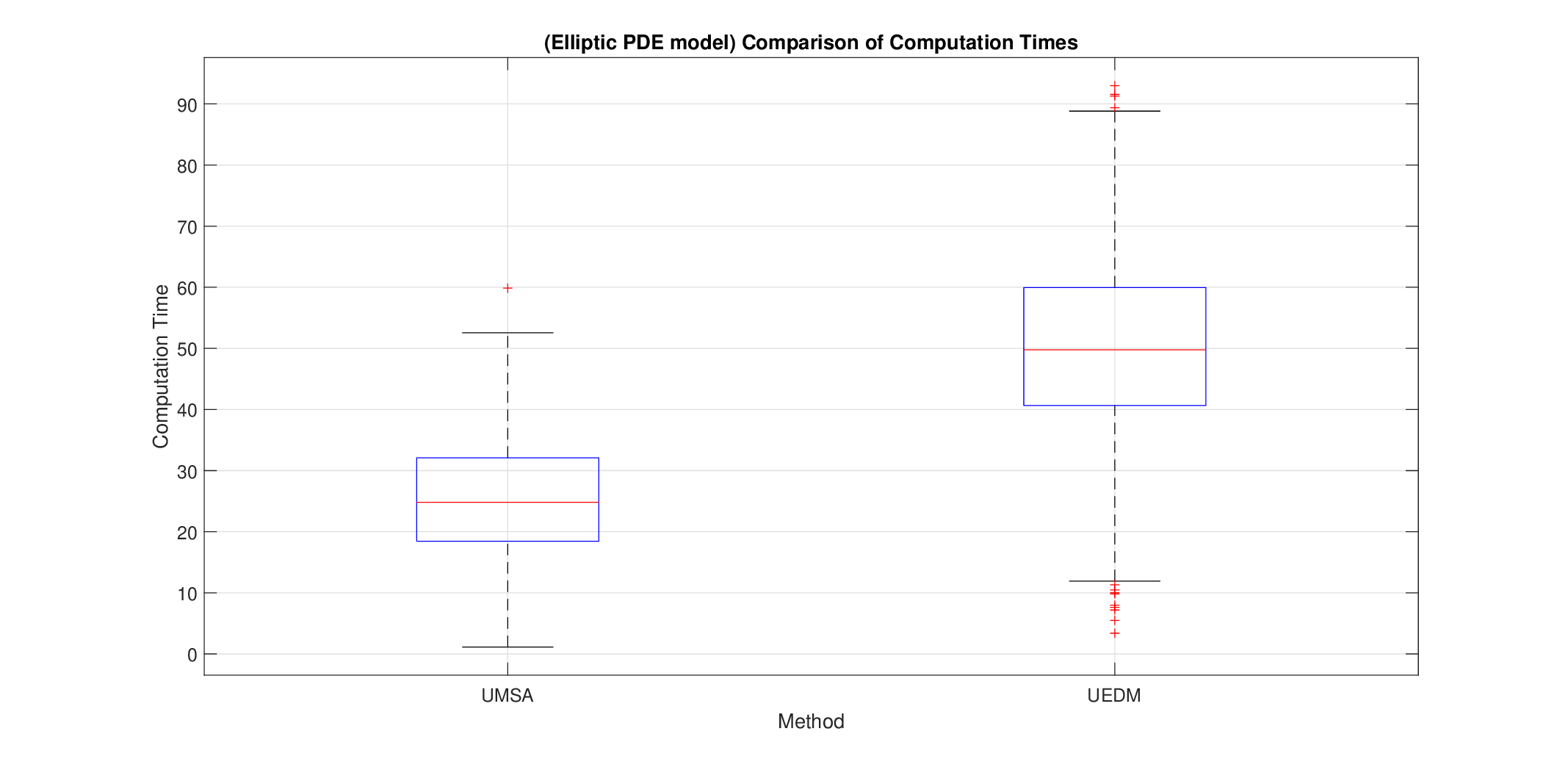} 
\caption{Box plot comparing computation times for UMSA and UEDM under the elliptic Bayesian inverse problem. The plot highlights the median, interquartile range, and spread of computation times, with our method showing lower and more consistent times compared to UEDM.}
\label{fig:boxPDE}
\end{figure}

Figure~\ref{fig:boxPDE} compares computation times between UMSA and UEDM under the elliptic PDE model. The box plot highlights the efficiency of our method, which consistently exhibits lower and more stable computation times compared to UEDM.

\subsection{Epidemiology Inverse Problem}

Our final numerical experiments consider an inverse problem for an ODE. Specifically, we consider parameter inference for an epidemiological model applied to analyze COVID-19 infections in the UK. We utilize a compartmental model tailored for a closed population, where $S(t)$ represents the proportion susceptible to the disease, $I(t)$ denotes the infected individuals, $R(t)$ signifies those who have recovered and are no longer infectious, and $\Xi(t)$ represents symptomatic infected individuals in quarantine. The model's dynamics are governed by the following system of ordinary differential equations

\begin{align*}
& \frac{d}{d t} S(t) = -a S(t) I(t) - x_1 S(t), \\
& \frac{d}{d t} I(t) = a S(t) I(t) - (b + x_1 + x_2) I(t), \\
& \frac{d}{d t} R(t) = b I(t) + x_1 S(t), \\
& \frac{d}{d t} \Xi(t) = (x_1 + x_2) I(t).
\end{align*}
Setting $t = 0$ at January 24, 2020 (the date of the first reported case), the initial conditions are 
$$
\left(S(-x_3), I(-x_3), R(-x_3), \Xi(-x_3)\right) = \left(1 - \frac{1}{N_{\text{pop}}}, \frac{1}{N_{\text{pop}}}, 0, 0\right),
$$
 where $N_{\text{pop}} = 66,650,000$ denotes the UK population size. Our prior distributions for $x = (x_1, x_2, x_3)$ are specified as uniform distributions $X_1 \sim \mathcal{U}{[0.001, 0.003]}$, $X_2 \sim \mathcal{U}{[0.2, 0.4]}$, and $X_3 \sim \mathcal{U}{[5, 25]}$. To incorporate under-reporting, we model the observed proportion of daily confirmed cases $\left(Y_i\right)_{i=1}^P$ as:
\[
\log \left(Y_i\right) = \log \left(G_i(x)\right) - \Gamma_i,
\]
where,
\[
G_i(x) = a \int_{n-1+i}^{n+i} S(t; x) I(t; x) \, dt,
\]
represents daily new infections, and $\left(\Gamma_i\right)_{i=1}^P$ are independent gamma random variables with shape parameter $\theta_1 > 0$ and scale parameter $\theta_2 > 0$. We focus on $J= 40$ observations starting from February 12, 2020, due to earlier data unreliability. The unnormalized posterior density of $X$ given $y$ and $\theta = (\theta_1, \theta_2)$, using the gamma likelihood, is:
\[
\gamma_\theta(x) = \prod_{i=1}^P \frac{1}{\Gamma\left(\theta_1\right) \theta_2^{\theta_1}} \left(\log \left(\frac{G_i(x)}{y_i}\right)\right)^{\theta_1 - 1} \exp \left(-\frac{\log \left(\frac{G_i(x)}{y_i}\right)}{\theta_2}\right) I_{\mathrm{A}}(x),
\]
such that $\mathrm{A} = \{x \in \mathrm{X} : G_i(x) \geq y_i, \, i=1, \ldots, P\}$.
Practical implementation of MCMC methods requires approximating $G_i(x)$, achieved by approximating $h(t; x)$ satisfying $\frac{d}{d t} h(t) = a S(t) I(t)$ using a fourth-order Runge-Kutta method with stepsize $\Delta_l = 0.1 \times 2^{-l}$.
With this approximation in place, we can apply our proposed methodology to estimate expectations. For computing our estimators, we utilized the reflection maximal coupling of pCN kernels, with algorithmic parameters $\rho_{\theta, l} = 0.95$ and $\sigma_{\theta, l} = I_d$ for all $l \in \mathbb{Z}^{+}$. 
Since the computational cost of the pCN kernel at level $l$ scales as $\Delta_l^{-\omega}$ with $\omega = 1$, we set as the previous case $\mathbb{P}_{L}(l) = \Delta_l^{\rho\zeta} \; \mathbf{1}_{ \{ 3, ...,7 \} } (l)$, with $\rho \in(0,1)$,  $\zeta > 0$ to ensure that the single term estimator achieves both finite variance and finite expected cost. The MSE is calculated as
$$
\text{MSE}_{\theta_k} = \frac{1}{100} \sum_{i=1}^{100} \left| \hat{\theta}_{k,\overline{M}}^{i,\star} - \theta^{\star}_{k} \right|^2, \quad \text{for} \; k=1,2.
$$

We present our numerical simulations in Figures  \ref{fig:SIR} - \ref{fig:boxPSIR}. Figure \ref{fig:SIR} illustrates the dynamics of the Susceptible-Infected-Recovered (SIR) model. The plot shows how the proportions of the population in each compartment (susceptible, infected, and recovered) evolve over time. The susceptible population decreases as individuals become infected, the infected population initially rises before eventually declining, and the recovered population increases as individuals recover from the disease. 
\begin{figure}[h!]
\centering
\includegraphics[width=0.65\textwidth]{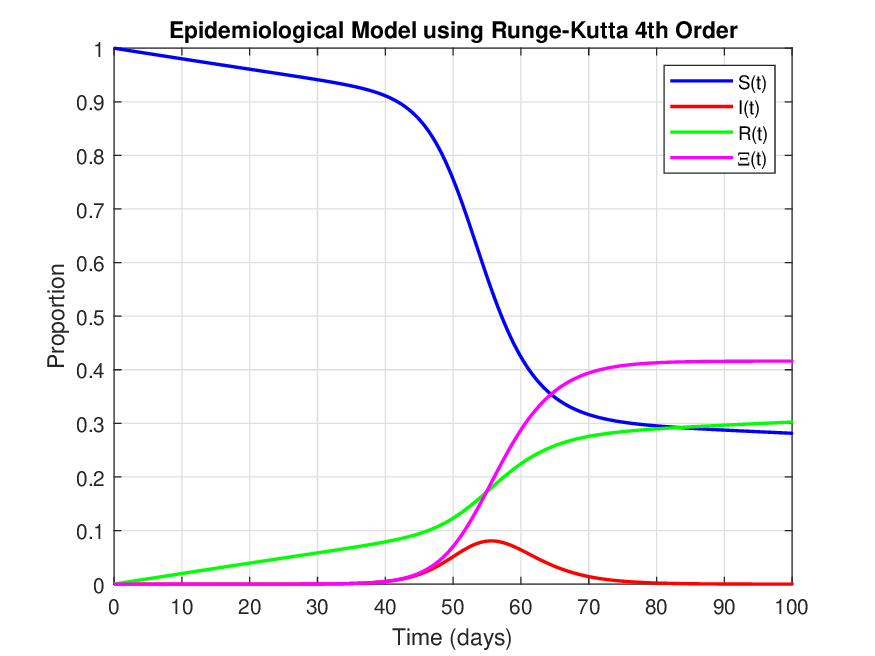} 
\caption{SIR model using Runge-Kutta 4 method.}
\label{fig:SIR}
\end{figure}

Figure \ref{fig:UnbiasedSIR} demonstrates that the runtime scales approximately as $\text{MSE}_{\theta}^{-1}$. This scaling is a direct consequence of our estimator's unbiasedness, as proven in Proposition \ref{prop:1}. We proceed by comparing our UMSA algorithm with the UEDM algorithm proposed in \cite{disc_models}, similar to our approach for the previous example. 

Finally Figure \ref{fig:boxPSIR} presents a box plot comparing the CPU computation times of both methods, UMSA and UEDM.. The plot illustrates the distribution of computation times, where our proposed UMSA algorithm exhibits a shorter median computation time and narrower spread, indicating more consistent performance.  Once again, this comparison highlights the improved performance over the methodology presented in \cite{ub_grad_new,disc_models}.

\begin{figure}[h!]
\centering
\includegraphics[width=0.95\textwidth]{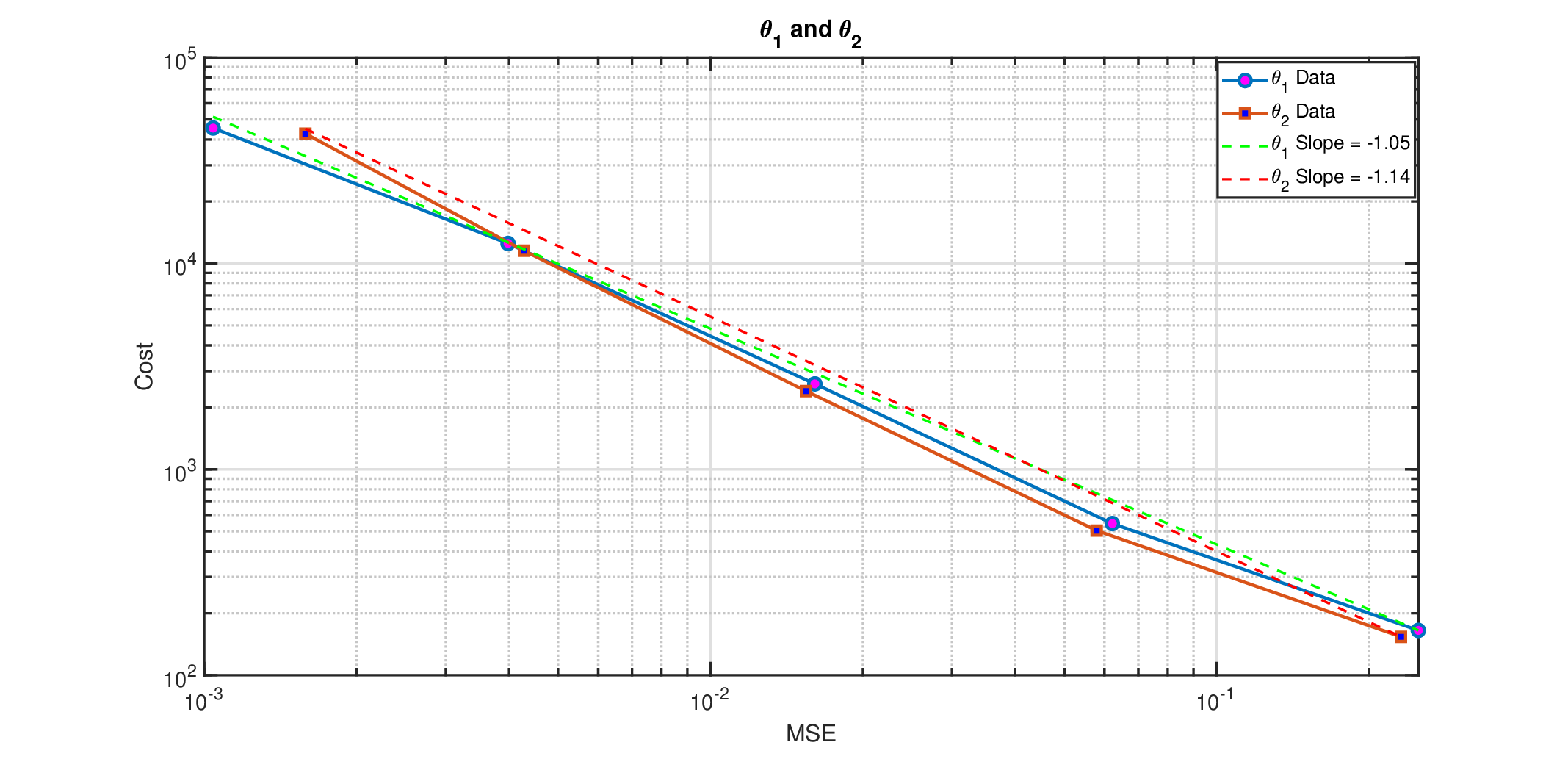} 
\caption{MSE vs. Cost for the parameter $\theta =(\theta_1 , \theta_2)$ .}
\label{fig:UnbiasedSIR}
\end{figure}

\begin{figure}[h!]
\centering
\includegraphics[width=0.95\textwidth]{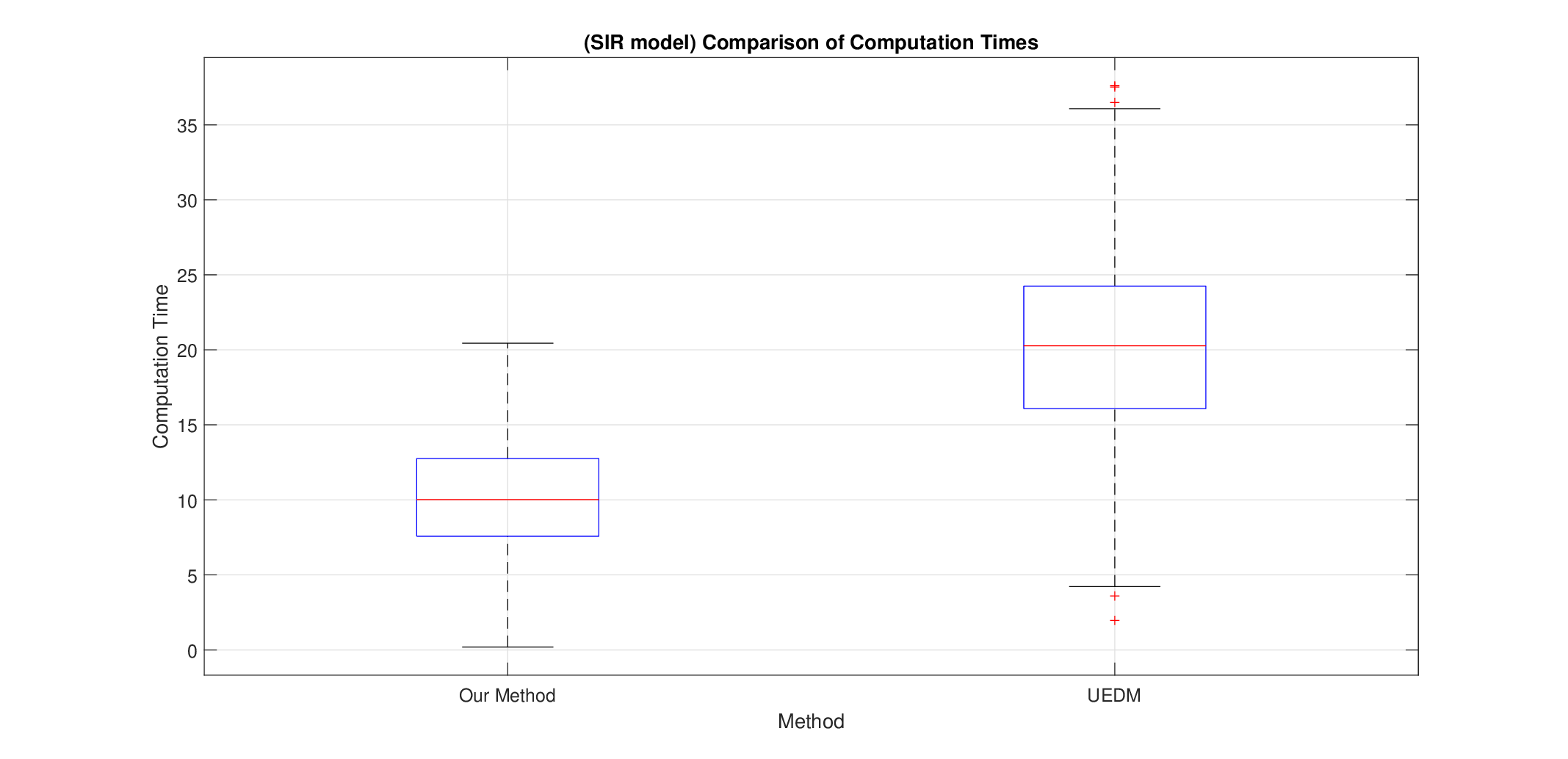}
\caption{SIR box plot comparing computation times for UMSA and UEDM under the SIR model. The plot highlights the median, interquartile range, and spread of times.}
\label{fig:boxPSIR}
\end{figure}
 
\section*{Acknowledgements}
AJ was supported by CUHK-SZ start-up funding. NKC is supported by an EPSRC-UKRI AI for Net Zero Grant:
"Enabling CO2 Capture and Storage Projects Using AI", (Grant EP/Y006143/1). NKC is
also supported by a City University of Hong Kong Start-up Grant, project number: 7200809.

\end{document}